\keywords{Elastic-degenerate strings, NP-hardness, longest common factor, minimal unique substring, minimal absent word, anti-power, longest previous factor} \category{} 
\title{Hardness Results on Characteristics for Elastic-Degenerated Strings}
\date{\today}
\author{Dominik K\"{o}ppl}{University of Yamanashi, Japan \and \url{https://dkppl.de/}}{dkppl@yamanashi.ac.jp}{https://orcid.org/0000-0002-8721-4444}{JSPS KAKENHI Grant Number 23H04378 and Yamanashi Wakate Grant Number 2291}
\author{Jannik Olbrich}{Ulm University, Ulm, Germany}{jannik.olbrich@uni-ulm.de}{https://orcid.org/0000-0003-3291-7342}{Supported by the Deutsche Forschungsgesellschaft, Grant No.\ OH 53/7-1}{}
\authorrunning{D.\ Köppl and J.\ Olbrich}
\newtheorem{problem}{Problem}
\newcommand{\antipower}{{\mdseries\scshape Anti-Power}\xspace}
\newcommand{\PbAbsentWord}{{\mdseries\scshape Absent Word}\xspace}
\newcommand{\PbUniqueSubstr}{{\mdseries\scshape Unique Substring}\xspace}
\newcommand{\PbLongestRepeating}{{\mdseries\scshape lrf}\xspace}
\newcommand{\PbStringIntersection}{{\mdseries\scshape edsi}\xspace}
\newcommand{\PbHamiltonianPath}{{\mdseries\scshape Hamiltonian Path}\xspace}
\newcommand{\PbCommonSubsequence}{{\mdseries\scshape Common Subsequence}\xspace}
\newcommand{\PbLongestPrevious}{{\mdseries\scshape lpf}\xspace}
\newcommand*{\edc}[1]{\left\{\arraycolsep=1.4pt\begin{array}{c}#1\end{array}\right\}}
\newcommand*{\eds}[1]{\ensuremath{\widetilde{#1}}}
\newcommand*{\Language}{\ensuremath{\mathcal{L}}}
\newcommand*{\adj}{\mathrm{adj}}
\definecolor{solarizedYellow}{HTML}{B58900}
\definecolor{solarizedOrange}{HTML}{CB4B16}
\definecolor{solarizedRed}{HTML}{DC322F}
\definecolor{solarizedMagenta}{HTML}{D33682}
\definecolor{solarizedViolet}{HTML}{6C71C4}
\definecolor{solarizedBlue}{HTML}{268BD2}
\definecolor{solarizedCyan}{HTML}{2AA198}
\definecolor{solarizedGreen}{HTML}{859900}
\newcommand*{\block}[1]{\subparagraph{#1}}
\newcommand*{\StateComplexity}[1]{\begin{flushright}
\vspace{-0.5em}
\color{gray}
\{#1\}
\end{flushright}
}
\begin{document}
\maketitle
\begin{abstract}
    Generalizations of plain strings have been proposed as a compact way to represent a collection of nearly identical sequences
    or to express uncertainty at specific text positions by enumerating all possibilities.
    While a plain string stores a character at each of its positions, generalizations consider a set of characters (indeterminate strings),
    a set of strings of equal length (\emph{generalized degenerate strings}, or shortly \emph{GD strings}), 
    or a set of strings of arbitrary lengths (\emph{elastic-degenerate strings}, or shortly \emph{ED strings}).
    These generalizations are of importance to compactly represent such type of data, and find applications in bioinformatics for representing and maintaining a set of genetic sequences of the same taxonomy or a multiple sequence alignment.
    To be of use, attention has been drawn to answering various query types such as pattern matching or measuring similarity of ED strings by generalizing techniques known to plain strings.
    However, for some types of queries, it has been shown that a generalization of a polynomial-time solvable query on classic strings becomes NP-hard on ED strings, e.g.~[Russo et al.,2022].
    In that light, we wonder about other types of queries, which are of particular interest to bioinformatics: the search for the longest repeating factor, unique substrings, 
    absent words, anti-powers, and longest previous factors. 
    While we obtain a polynomial time algorithm for the first problem on ED strings, we show that all others are NP-hard to compute, some of them even under the restriction that the input can be modelled as an indeterminate or GD string.
\end{abstract}

\section{Introduction}

In many applications, uncertainty is a known problem that hinders the adaptation of classic algorithms requiring full information.
We here model uncertainty by explicitly stating all possibilities.
Such a model found popularity when working with biological data such as gene sequences, for which indeterminate strings~\cite{smyth09adaptive} have been proposed.
An indeterminate string models a string that can have multiple alternative characters at any of its positions; 
each element of an indeterminate string becomes a subset of the input alphabet, which we call a \emph{symbol}.
Symbols generalize the IUPAC notation~\cite{iupac71abbreviations} to represent the ambiguous specification of nucleic acids in DNA or RNA sequences.
A generalization of indeterminate strings are generalized degenerate (GD) strings that store strings of equal length instead of characters.
Finally, \cite{iliopoulos21efficient} removed the restriction on equal length and called the modelled string an elastic-degenerate (ED) string,
which was proposed as a model of a set of similar sequences.
From a language theoretic point of view, an ED-string is a regular expression without nested parentheses or Kleene star.
\Cref{figExampleEDString} gives an example of the three models.

\begin{figure}[h]
    \begin{minipage}{0.25\linewidth}
\[
    \eds{T_1} = 
    {\texttt{A}}
\edc{\texttt{A} \\ \texttt{C}}
{\texttt{C}}
\edc{\texttt{G} \\ \texttt{T}}
\]
$
\Language(\eds{T_1}) = \{$
\texttt{AACG},
\texttt{AACT},
\texttt{ACCG},
\texttt{ACCT}
$\}$
    \end{minipage}
    \hfill
    \begin{minipage}{0.25\linewidth}
\[
    \eds{T_2} = 
    {\texttt{A}}
\edc{\texttt{A}\texttt{G} \\ \texttt{C}\texttt{A}}
    {\texttt{C}}
\edc{\texttt{G} \\ \texttt{T}}
\]
$\Language(\eds{T_2}) = \{$
\texttt{AAGCG},
\texttt{AAGCT},
\texttt{ACACG},
\texttt{ACACT}
$\}$
    \end{minipage}
    \hfill
    \begin{minipage}{0.35\linewidth}
\[
    \eds{T_3} = 
    {\texttt{A}}
\edc{\texttt{A}\texttt{G} \\ \epsilon}
    {\texttt{C}}
\edc{\texttt{G} \\ \texttt{T} \\ \texttt{C}\texttt{C}\texttt{T}}
\]
$\Language(\eds{T_3}) = \{$
\texttt{AAGCG},
\texttt{AAGCT},
\texttt{AAGCCCT},
\texttt{ACG},
\texttt{ACT},
\texttt{ACCCT}
$\}$
    \end{minipage}
    \caption{Example for an indeterminate string~$\eds{T_1}$ (left), 
    a GD string~$\eds{T_2}$ (middle), and 
an ED string~$\eds{T_3}$ (right). The language $\Language(\eds{T_i})$ of each string $\eds{T_i}$ is depicted below; the language is the set of strings represented by the indeterminate/GD/ED string. A symbol storing only one character $c$ is written as $\edc{c}$ or just $c$.}
    \label{figExampleEDString}
\end{figure}

ED strings found application to encode the consensus of a population of sequences~\cite{liao23draft,alzamel18degenerate,buchler23efficient} in a multiple sequence alignment.
While advances in pattern matching on ED strings spawned recent research interest, not much is known about the computational complexity of 
typical string characteristics used, among others, in bioinformatics on classic strings.
We here study the adaptation of five string problems to ED strings.
For four among them, we however proof NP-hardness, which makes it unlikely to come up with efficient solutions for these problems on large ED strings.
Our results are as follows.

\begin{enumerate}
    \item We can compute the longest repeating factor in quadratic time on an ED string (Thm.~\ref{thm_longest_repeating_substring}).
    \item Finding a unique substring of length $k$ in an indeterminate string is NP-hard (Thm.~\ref{thm_mus_np_hard}).
    \item Finding an absent word of length $k$ in an indeterminate string is NP-hard (Thm.~\ref{thm_maw_np_hard}).
    \item Finding a $k$-anti-power in a GD string is NP-hard (Thm.~\ref{thm_antipower_np_hard}).
    \item Computing the longest previous factor length in an ED string is NP-hard (Thm.~\ref{thm_lpf_np_hard}).
\end{enumerate}
While we can solve the first problem in quadratic time, 
we give hardness results for the latter four, where the last problem can be reduced to the first problem for a restricted class of ED strings,
including GD and indeterminate strings.
Our reductions for NP-hardness are based on 3-SAT, \PbHamiltonianPath{}, and \PbCommonSubsequence{}.
For finding a minimal unique substring or a minimal absent word, we propose MAX-SAT formulations and an implementation that can solve both problems in practice on inputs of moderate size.
Finally, we conclude with an open problem on the inequality testing of two ED strings.

\section{Related Work}

We structure the related work in work dedicated to ED strings and work tackling one of the aforementioned problems, however on different types of input (mostly classic strings).

\subsection{ED Strings}
A big part of research on ED strings has been devoted to pattern matching, 
structural properties or regularities, the reconstruction of ED strings from indexes that are not necessarily self-indexes, and the comparison of two ED strings.

\block{Pattern Matching}
For indeterminate strings, we point out a Boyer--Moore adaptation~\cite{holub08fast},
a combination~\cite{smyth09adaptive} of ShiftAnd and Boyer--Moore--Sunday,
and an KMP-based approach~\cite{neerja20simple}.
For ED strings, there is a rather long line on improvements for exact pattern matching~\cite{aoyama18faster,cislak18sopang,bernardini19elastic,iliopoulos21efficient,prochazka23backward,ascone24unifying},
or with one error~\cite{bernardini22elastic}.
There are indices for pattern matching on ED texts, based on the suffix tree~\cite{gibney20efficient}, on the Burrows--Wheeler transform (BWT)~\cite{maciuca16natural} or 
on a graph for read mapping~\cite{buchler23efficient}.
The first study~\cite{gibney20efficient} also gives lower bounds on the time complexity for indexed pattern matching queries.
The problem to align a pattern to a GD string~\cite{mwaniki23fast} or ED string~\cite{mwaniki22optimal} has also recently been studied.

Given that $r$ is the maximum size of the set a symbol in the ED string represents,
NP-hardness for order-preserving pattern matching has been shown for $r=3$~\cite{russo22orderpreserving} and subsequently even for $r=2$~\cite{gawrychowski20indeterminate},
who also give NP-hardness for parameterized pattern matching for $r=2$. 

\block{Structural Properties}
For structural properties on indeterminate strings, we are aware of 
the computation of covers and/or seeds~\cite{antoniou08conservative,bari09finding}, 
maximal gapped palindromes~\cite{alzamel23maximal},
a study on periodicity~\cite{gabory23periodicity}, 
an extension of the Lyndon factorization~\cite{daykin17indeterminate},
and an augmentation with rank/select data structures~\cite{bille24rank}.
\cite{crochemore17covering} showed that computing the shortest cover is NP-complete, but fixed-parameter tractable (FPT) in the number of symbols with sizes larger than one.
Finally, \cite{louza21new} gave a new representation model for indeterminate strings.

\block{Reconstruction}
Reconstruction of indeterminate strings from data structures is another active line of research.
The reconstruction has been considered from border array, or suffix array and the LCP array~\cite{nazeen12indeterminate},
from a graph based on the prefix array~\cite{alatabbi15inferring},
or from a graph whose vertices are text positions and edges are text positions having matching characters~\cite{helling18constructing}.
\cite{christodoulakis15indeterminate} gave a characterization of whether an array can be the prefix array of a classic string or an indeterminate string.
Finally, \cite{blanchetsadri17new} studied the relation of prefix arrays of indeterminate strings to undirected graphs and border arrays.

\block{Comparison}
\cite{alzamel18degenerate} studied how to compare GD strings.
An extension to ED strings is due to~\cite{gabory23comparing}, who focused on the intersection of the languages of two ED strings, 
a problem they coined as the \emph{ED String Intersection Problem} (\PbStringIntersection{}).
They showed that \PbStringIntersection{} is NP-complete if the alphabet is unary while the letters are stored run-length compressed~\cite[Thm.~9]{gabory23comparing}.
Otherwise, \PbStringIntersection{} is polynomial~\cite[Thm~24]{gabory23comparing}.
Additionally, they gave conditional lower bounds for computing the intersection.

\subsection{Addressed Problems on Classic Strings}

We here give motivation for the problems we address in this article by highlighting, for each problem, its importance and existing solutions, which however do not cover ED strings.

\block{Longest Repeating Substring}
The search for longest repeating substring is important, among others, for detecting similarities in biological sequences~\cite{becher09efficient}.
While textbook solutions for this problem exist~\cite[APL4]{gusfield97algorithms},
the problem has been extended such that the reported substring has to cover specific string positions~\cite{tian15longest,xu16stabbing}
or needs to be common among several input texts~\cite{ohlebusch08space}.
While the textbook solution uses suffix trees, ideas have been adopted to BWT-based indexes~\cite{beller12spaceefficient,kulekci12efficient}.

\block{Minimal Unique Substring}
The problem of computing minimal unique substrings (MUSs) was introduced by Pei et al.~\cite{pei13sus}, 
and has found application in sequence alignment~\cite{adas15nucleotide}, genome comparison~\cite{haubold05genome}, and phylogenetic tree construction~\cite{chairungsee14new}.
A weaker version, the shortest unique substring (SUS), asks for only local minimality, which can be expressed by a point or region to cover.
The interest in computing SUSs resulted in a line of research improvements on the time and space complexities~\cite{hu14sus,tsuruta14sus,ileri15simple,mieno16sus,mieno20space}.
There are solutions offering a trade-off in both complexities~\cite{ganguly17sus,bannai20more}.
Also, different settings like the computation within a sliding window~\cite{mieno20minimal} or on run-length encoded strings~\cite{mieno16sus} have been proposed.
For theoretical guarantees, \cite{mieno17tight} studied the maximum number of SUSs covering a given query position.
Recently, a survey article~\cite{abedin20survey} has been dedicated to SUS computation.
Finally, there are variations for computing SUSs within a given range~\cite{abedin19rangesus,abedin20efficient} or computing 
SUSs with $k$ mismatches~\cite{hon17inplace,schultz21parallel,allen21ultafast}.

\block{Minimal Absent Word}
Minimal absent words (MAWs) have been introduced by \cite{pinho09finding} as biomarkers for potential preventive and curative medical applications.
MAWs have been found valuable for phylogeny~\cite{chairungsee12minimal}, sequence comparison~\cite{crochemore16lineartime},
information retrieval of musical content~\cite{crawford18searching},
and the reconstruction of circular binary strings~\cite{ota23reconstruction}.
For computing MAWs, algorithms have been proposed that use the suffix array~\cite{barton14lineartime}, the directed acyclic word graph (DAWG)~\cite{fujishige16computing,fujishige18truncated},
or maximal repeats~\cite{azmi16identifying}.
Algorithms working in parallel~\cite{barton15parallelising} or in external memory~\cite{heliou17emmaw} have also been proposed.
Extensions are the computation of MAWs of run-length compressed strings~\cite{akagi22minimal}, MAWs common to multiple strings~\cite{okabe23lineartime}, or MAWs on trees~\cite{fici19minimal}.
Another extension is the computation within a sliding window~\cite{crochemore17minimal,mieno20minimal}, for which 
bounds on the number of changes in the answer set based on the sliding movement have been analyzed~\cite{akagi22combinatorics}.
	
\block{Anti-Power}
\cite{fici18antipower} introduced anti-powers as a new combinatorial concept.
We are aware of algorithms for computing anti-powers~\cite{badkobeh18antipower,alzamel19online},
and counting them~\cite{kociumaka22efficient}.
Anti-powers have been explicitly analyzed for prefixes of the Thue--Morse word~\cite{defant17antipower,gaetz21antipower}
and for aperiodic recurrent words~\cite{berger20antipowers}.
A specialization of anti-powers is to strengthen inequality by \emph{Abelian inequality}, where all factors of an Abelian anti-power must have pairwise distinct Parikh vectors~\cite{fici19antipower}.
Another specialization is due to~\cite{burcroff18klambda} with the restriction that the computed anti-power must be also a repetition, i.e., an integer power of a word.

\block{Longest-Previous-Factor}
The longest previous factor (LPF) array~\cite{franek03lpf,crochemore08lpf} has been proposed for pattern matching, computing the Lempel--Ziv 77 factorization, and detecting runs.
There are algorithms~\cite{crochemore08lpf,crochemore09lpf,chairungsee19approach} computing the LPF array in linear time.

\section{Preliminaries}
We start with introducing basic concepts for strings, and then formalize generalizations to indeterminate, GD, and ED strings.

\block{Strings} 
Let $\Sigma$ be an alphabet.
An element of $\Sigma^*$ is called a (classic) \emph{string}.
Given a string $T$, the $i$-th character of $T$ is denoted by $T[i]$, for an integer~$i \in [1..|T|]$,
where $|T|$ denotes the length of~$T$.
For two integers~$i$ and $j$ with $1 \leq i \leq j \leq |T|$,
a substring of $T$ starting at position $i$ and ending at position $j$ is
denoted by $T[i..j]$, i.e., $T[i..j] = T[i]T[i+1]\cdots T[j]$.
A substring $P$ of $T$ is called \emph{proper} if $P \neq T$.

\block{Indeterminate and Generalized/Elastic-Degenerated Strings}
We study the following extensions of classic strings.
For that, we make the distinction between \emph{characters} of the alphabet $\Sigma$ and \emph{symbols} of a string belonging to one of the three types of generalizations defined as follows.
First, an \emph{indeterminate string} is a string $\eds{S}[1..n]$ whose symbols are drawn from non-empty subsets of $\Sigma$, i.e., $\emptyset \neq \eds{S}[i] \subset \Sigma$.
Next, a \emph{generalized degenerate string} or \emph{GD string} is a string $\eds{S}[1..n]$ whose symbol at the $i$-th position is a non-empty subset of strings in $\Sigma^{k_i}$, 
i.e., $\emptyset \neq \eds{S}[i] \subset \Sigma^{k_i}$ for some $k_i$ depending on $i$.
Finally, an \emph{elastic-degenerate string} or \emph{ED string} is the most general type allowing each symbol to be a set of strings including the empty word~$\epsilon$, which is non-empty and not $\{\epsilon\}$, i.e., $S[i] \subset \Sigma^*$ with $\emptyset \neq S[i] \neq \{ \epsilon \}$.
Each indeterminate string can be expressed as a degenerate string by interpreting characters as strings of length one.
For any type of string class, we call its elements \emph{symbols}.
Like for strings, we denote with $\eds{S}[i..j] = \eds{S}[i] \cdots \eds{S}[j]$ the indeterminate/GD substring of that starts at position $i$ and ends at position $j$.

The \emph{Cartesian concatenation} of two symbols $X$ and $Y$ is $X \otimes Y := \{ xy \mid x \in X, y \in Y \}$.
Consider an ED string $\eds{S}$ of length $n$. 
The \emph{language} of an indeterminate/GD string $\eds{S}$ is 
$\Language(\eds{S}) := \Language(\eds{S})[1] \otimes \Language(\eds{S})[2] \otimes \cdots \otimes \Language(\eds{S})[n]$, cf.~\cite[Def.~5]{alzamel18degenerate}.
An element of $\Language(\eds{S})$ has length $n$ if $\eds{S}$ is indeterminate, or length $\sum_{i=1}^n k_i$ if $\eds{S}$ is a GD string.
To address a character in an ED string $\eds{S}$, we assume that each symbol $\eds{S}[i]$ is an ordered list 
(e.g., ordered canonically in lexicographic order) such that $\eds{S}[i][j][k]$ addresses the $k$-th character of the $j$-th string in the $i$-th symbol. We say that $(i,j,k)$ is the \emph{text-position} of $\eds{S}[i][j][k]$.
Regarding the ED strings of \cref{figExampleEDString}, $\eds{T_1}[1][1][1] = \texttt{A}, \eds{T_1}[2][2][1] = \texttt{C}, \eds{T_3}[4][3][3] = \texttt{T}$.
The \emph{size} of a symbol is the total length of the characters it contains, where the empty string is counted as having length one,
e.g., the sizes of the symbols of $\eds{T_3}$ are 1,2,1,3.
The \emph{size} of an ED string~$\eds{S}$, denoted by $||\eds{S}||$ is the sum of the sizes of all its symbols.
Finally, let $r(\eds{S})$ or shortly $r$ denote the maximum number of strings a symbol in $\eds{S}$ contains.

We say that a classic string $P$ has an \emph{occurrence} starting at text-position $(i,j,k)$ in an ED string $\eds{S}$ if 
we can factorize $P$ into $P = P_1 \cdots P_m$ such that $\eds{S}[i][j][k..] = P_1$, $P_{m}$ is a prefix of some string in $\eds{S}[i+m-1]$ and $P_x \in \eds{S}[i+x-1]$ for all $x \in [2..m-1]$.
For instance, $P = \texttt{CAC}$ has an occurrence in $\eds{T_2}$ at text-position $(2,2,1)$.

\section{Longest Repeating Substring}\label{secLRF}
The \emph{longest repeating substring (LRF)} of a classic string $T$ is the longest substring that occurs at least twice in $T$. 
We generalize LRF to ED strings as follows.

\begin{problem}[\PbLongestRepeating{} problem]
    Given an ED string $\eds{S}$, the \PbLongestRepeating{} problem is to determine a longest string that occurs at least twice in the language of~$\eds{S}$ but has to start at least at two different text-positions in $\eds{S}$.
\end{problem}

Our idea to compute \PbLongestRepeating{} is to reduce the problem to longest common extension (LCE) queries.
The LCE query in a classic string $T$ asks for the longest common prefix of two suffixes $T[i..]$ and $T[j..]$.
In the ED string setting, an LCE query at two text-positions asks for the longest string that has an occurrence starting at both text-positions.

\newcommand*{\LinS}{\ensuremath{L}}

Now, to find an LRF, we compute the LCE of every pair of pairwise different text-positions, and report the longest LCE length we discovered.
We can compute all LCE lengths in $\mathcal O(N^2)$ time with dynamic programming over the linearized form~\cite{buchler23efficient} of $\eds{S}$, where $N = ||\eds{S}||$ is the size of $\eds{S}$.
To this end, we introduce the linearized form of an ED string, which maps an ED string to a classic string whose alphabet contains three additional special characters.
The linearized form of an ED symbol $\edc{W_1, W_2, \ldots, W_k}$ is $\texttt{(}W_1 \texttt{|} W_2 \texttt{|} \ldots \texttt{|} W_k\texttt{)} \in 
(\Sigma \cup \{\texttt{(},\texttt{)},\texttt{|}\})^*$.
The characters in $\{\texttt{(},\texttt{)},\texttt{|}\}$ are called \emph{special} and are assumed to be not in $\Sigma$.
The linearized form of $\edc{W_1}$ is just $W_1$.
Then the linearized form of an ED string $\eds{S}$, denoted by $\LinS$, is the concatenation of the linearized forms of its symbols.
For example, the linearized form of $\eds{S}=\texttt{b}
\edc{\texttt{a}\\\varepsilon}\texttt{c}\edc{\texttt{abc}\\\texttt{c}}\edc{\texttt{a}\\\texttt{b}}$ is 
$\LinS  = \texttt{b(a|)c(abc|c)(a|b)}$.
Since the number of characters to write at most doubles, $|\LinS|\in\mathcal{O}(N)$.
The next function \(
\mathrm{next}(i, c) = \min\{j \mid j > i, \LinS[j] = c\}\cup\{1 + |\LinS|\}
\)
gives the first text position of an occurrence of $c$ in $\LinS[i+1..]$, or otherwise the invalid position~$1+|L|$.
Let 
\[
    \mathcal{P}(i) := \{i+1\}\cup\{j+1 \mid i<j<\mathrm{next}(i,\texttt{')'}) \wedge \LinS[j] = \texttt{'|'}\} \text{~for~} \LinS[i] = \texttt{'('}
\]
denote all starting positions of non-special character strings within the parentheses starting at $\LinS[i]$ divided by \texttt{'|'} characters.
For instance, $\mathcal{P}(7) = \{8, 12 \}$ and $\mathcal{P}(2) = \{3, 4\}$ for our example.
Then the function $\adj$ below gives the set of starting positions of non-special characters we want to match in case one of the characters is special.

\begin{minipage}{0.5\linewidth}
\(
\adj(i) = \begin{cases}
    \{i\} & \text{~if $\LinS[i] \in \Sigma$,}\\
    \{\mathrm{next}(i, \texttt{')'}) + 1\} & \text{~if $\LinS[i] = \texttt{'|'}$,} \\
    \{i+1\} & \text{~if $\LinS[i] = \texttt{')'}$,} \\
    \mathcal{P}(i) & \text{~if $\LinS[i] = \texttt{'('}$.} \\
\end{cases}
\)
\end{minipage}
\begin{minipage}{0.45\linewidth}
In particular, $\adj(i)$ advances if and only if $i$ is a position of a special character.
We can precompute $\adj(i)$ for all $i$ in linear time by scanning $\LinS$ from right to left with a stack maintaining the positions of the recently read special characters.
\end{minipage}
Let $D(i,j)$ be the longest common extension of two text-positions in \eds{S} that are mapped to the indices $i$ and $j$ in $\LinS$ (the mapping of text-positions to $\LinS$-positions is injective but not surjective).
We obtain $D$ by the following recursion formula.
\[
D(i,j) = \begin{cases}
    0 & \text{~if $\max\{i,j\} > |\LinS|$,} \\
    1 + D(i + 1,j + 1) & \text{~if $\LinS[i]=\LinS[j]$ and $\LinS[i]\in\Sigma$,} \\
    0 & \text{~if $\LinS[i]\neq\LinS[j]$ and $\LinS[i],\LinS[j] \in\Sigma$,} \\
    \max\{D(i',j') \mid (i',j')\in\adj(i)\times\adj(j)\} & \text{~otherwise.}
\end{cases}
\]

\begin{example}
We evaluate $D(1,9)$ of the above example. We visualize $\LinS$ with indices.

{\setlength{\tabcolsep}{2pt}
\begin{tabular}{l*{18}{c}}
    \hline
$i$ & 1 & 2 & 3 & 4 & 5 & 6 & 7 & 8 & 9 & 10 & 11 & 12 & 13 & 14 & 15 & 16 & 17 & 18 \\
$\LinS[i] = $ & \texttt{b} & \texttt{(} & \texttt{a} & \texttt{|} & \texttt{)} & \texttt{c} & \texttt{(} & \texttt{a} & \texttt{b} & \texttt{c} & \texttt{|} & \texttt{c} & \texttt{)} & \texttt{(} & \texttt{a} & \texttt{|} & \texttt{b} & \texttt{)} \\
    \hline
\end{tabular}
}
\begin{minipage}{0.35\linewidth}
To calculate $D(1,9)$, we recursively apply the above formula.
\end{minipage}

\begin{multicols}{2}
\begin{itemize}
    \item $D(1,9) = 1 + D(2,10)$
    \item $D(2,10) = \max\{ D(3,10), D(6,10) \}$ \\ $= D(6,10)$ because $D(3,10) = 0$
    \item $D(6,10) = 1 + D(7,11)$
    \item $D(7,11) = \max\{ D(8,14), D(12,14) \}$
    \item $D(8,14) = \max\{ D(8,15), D(8,17) \} = 1$
    \item $D(12,14) = \max\{ D(12,15), D(12,17) \} = 0$
\end{itemize}
\end{multicols}
We used that
$D(12,15) = 0$,
$D(12,17) = 0$,
$D(8,15) = 1$, and
$D(8,17) = 0$.
Hence, $D(1,9) = 1 + D(2,10) = 1 + D(6,10) = 1 + 1 + D(7,11) = 2 + D(8,15) = 3$.
\end{example}

We can compute $D(i,j)$ in constant time when neither $\LinS[i]$ nor $\LinS[j]$ is an open parenthesis character \texttt{'('}.
Because parentheses are neither nested nor overlapping in $\LinS$ by construction,
$\adj(i)\cap\adj(j) = \emptyset$ for all $i\neq j$ with $\LinS[i]=\LinS[j]=\texttt{'('}$.
Therefore, each pair of positions $(i',j')$ is included in at most one computation of $D(i,j)$ with $\LinS[i]=\LinS[j]=\texttt{'('}$.
Similarly, we use each $\adj(i)$ (with $\LinS[i]=\texttt{'('}$) in at most $\mathcal O(N)$ computations of $D(i,j)$ with $\LinS[j]\neq\texttt{'('}$.
By this amortization argument, we compute $D$ in $\mathcal O(N^2)$ time.

\begin{theorem}
    We can solve \PbLongestRepeating{} in $\mathcal O(N^2)$ time.
    \label{thm_longest_repeating_substring}
\end{theorem}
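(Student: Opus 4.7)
The plan is to reduce \PbLongestRepeating{} to answering longest common extension (LCE) queries between every pair of text-positions of $\eds{S}$, and to batch all these queries into the dynamic program $D(i,j)$ defined above on the linearized form $\LinS$. Since the map from text-positions to indices of $\LinS$ is injective and lands on non-special positions, reporting $\max\{D(i,j):i\neq j,\,\LinS[i],\LinS[j]\in\Sigma\}$ gives the length of the longest repeating substring, and keeping an arg-max reconstructs a witness.

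The preparatory work I would do first: build $\LinS$ with one left-to-right pass, then precompute $\adj(i)$ for every $i\in[1..|\LinS|]$ by a single right-to-left sweep that keeps a small stack of the currently enclosing open parenthesis and of the last seen pipe, which realizes the four cases of $\adj$ in constant time each. A direct counting argument then yields $\sum_i |\adj(i)|\in\mathcal{O}(N)$, because every non-special position appears in only a constant number of $\adj$-sets (its own, the $\adj$-set of its matching open parenthesis, and that of the pipe immediately preceding it inside the same block). Correctness of the recursion for $D$ follows by a routine case distinction: when both $\LinS[i]$ and $\LinS[j]$ are alphabet characters the LCE either extends by $1$ or terminates, and when at least one side is special the LCE must branch into one of the alternatives offered by the parenthesis structure, which are exactly the elements enumerated by $\adj$; taking the maximum over $\adj(i)\times\adj(j)$ therefore explores every admissible continuation.

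The main obstacle is the $\mathcal{O}(N^2)$ time bound, because naively expanding the fallback case can invoke Cartesian products whose total size looks super-quadratic. I would memoize $D$ in a $|\LinS|\times|\LinS|$ table, which caps the number of entries by $\mathcal{O}(N^2)$; entries whose two indices are not open parentheses cost $\mathcal{O}(1)$ each. For entries where exactly one of $\LinS[i],\LinS[j]$ equals \texttt{'('}, the per-entry cost $|\adj(i)|+|\adj(j)|$ sums over all $\mathcal{O}(N)$ choices of the other coordinate to $N\cdot\sum_i|\adj(i)|=\mathcal{O}(N^2)$. The interesting regime is $\LinS[i]=\LinS[j]=\texttt{'('}$: because parentheses in $\LinS$ are neither nested nor overlapping, the sets $\adj(i)$ for distinct open-parenthesis indices $i$ are pairwise disjoint, so each pair $(i',j')$ of non-special indices is charged to at most one such $D(i,j)$ evaluation, capping the total work in this regime by the number of such pairs, i.e.\ $\mathcal{O}(N^2)$. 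Summing the three regimes yields the claimed bound, and the whole computation finishes in $\mathcal{O}(N^2)$ time overall.
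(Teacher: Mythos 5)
Your proposal follows essentially the same route as the paper: the same dynamic program $D(i,j)$ over the linearized form with the $\adj$ sets, and the same three-regime amortization (neither index an open parenthesis, exactly one, both) yielding $\mathcal{O}(N^2)$. One cosmetic remark: your justification that each non-special position lies in only constantly many $\adj$-sets is not quite accurate (the position following a closing parenthesis lies in the $\adj$-set of \emph{every} pipe of that group), but the bound $\sum_i|\adj(i)|\in\mathcal{O}(N)$ still holds because every non-parenthesis index contributes a singleton and the sets $\mathcal{P}(i)$ are pairwise disjoint.
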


It is also unlikely that we can do better in general.
To understand that, we realize that $D$ can be used to solve the ED String Intersection problem (\PbStringIntersection{})~\cite{gabory23comparing}: Let $\eds{S_1}$ and $\eds{S_2}$ be two ED strings and $c \in \Sigma$ a character that has no occurrence in neither $\eds{S_1}$ nor $\eds{S_2}$.
Then the longest repeating substring of $\edc{c^k}\eds{S_1}\edc{c^k}\eds{S_2}\edc{c^k}$ has length at least $2k$ if and only if $\Language(\eds{S_1})\cap\Language(\eds{S_2})\neq\emptyset$ for a sufficiently large $k\in\mathcal O(N)$ with $N = ||\eds{S_1}|| + ||\eds{S_2}||$.
Because \cite[Thm.~2]{gabory23comparing} showed that there is no $\mathcal O(N^{2-\varepsilon})$-time algorithm for any constant $\varepsilon > 0$ for \PbStringIntersection{} under the Strong Exponential-Time Hypothesis (SETH), 
we conclude that there can be no $\mathcal O(N^{2-\varepsilon})$-time algorithm for \PbLongestRepeating{} under the same hypothesis.

\section{Hardness of Unique Substrings and Absent Words}\label{secMUSMAW}

In what follows, we reduce 3-SAT with $n$ variables to the decision problem whether a minimal unique substring (MUS) or a minimal absent word (MAW) of length at most $n$ exists.
The input of 3-SAT is a formula $F$ in conjunctive normal form (CNF), i.e., $F$ joins a set of clauses $C_i$ by conjunction (AND), where each clause $C_i$ is a disjunction of three literals.
We assume that the $n$ variables $x_1, \ldots, x_n$ are ranked.
The idea is to specify for each clause $C_i$ all variable assignments that do not make $C_i$ satisfiable. 
Such a set of assignments can be expressed by an indeterminate string~\eds{S} writing linearly all variables, where variables unused in $C_i$ can take any value.
By construction, subtracting the union of these unsatisfying assignments from all possible assignments gives us all satisfiable assignments.
The solution deviates from hereon for MUSs and MAWs:
For MUSs, if we also express the set of all assignments as an indeterminate substring of~\eds{S}, then the unsatisfiable assignments appear (at least) twice in \eds{S} while the satisfying assignments only once.
We show that these also coincide with the shortest MUSs if $F$ is satisfiable.
For MAWs, if we instead append an indeterminate string to~\eds{S} covering all possible substrings of length $n-1$, then a MAW must be of length at least $n$.
It is then left to show that a shortest MAW of~\eds{S} is substring of length $n$ that represents a satisfying assignment.

We start with a formal definition of MUSs based on classic strings.

\subsection{Hardness of Minimal Unique Substring}

We call a substring $U$ of a classic string $T$ \emph{unique} if it occurs exactly once in $T$, i.e., 
there is no other position in $T$ at which a substring starts that matches with $U$.
In particular, we call a unique substring $U$ to be a \emph{minimal unique substring} (\emph{MUS}) of $T$
if every proper substring of $U$ occurs at least twice in $T$.

We generalize the notion of MUSs to indeterminate strings by first simplifying the notion of an occurrence in indeterminate strings from ED strings.
For that, we say that a string $P \in \Sigma^*$ has an \emph{occurrence} in an indeterminate string $\eds{S}$ at position $i$ if 
there exists a string $X \in \Language(\eds{S})$ with $X[i..i+|P|-1] = P$.
We say a string $P$ in $\eds{S}$ is \emph{unique} if it has only one occurrence, meaning that the position of its occurrence is uniquely determined.
Further, We call $P$ a \emph{minimal unique substring (MUS)} if every proper substring $X$ of $P$ has at least two occurrences in $\eds{S}$.

We show that the decision version of finding a MUS in an indeterminate string is NP-hard.
With that we mean to find a unique substring of a specific length. 
The optimization version of minimizing the length then gives a MUS\@.
    
\begin{problem}[\PbUniqueSubstr{} problem]
    Given an indeterminate string $\eds{S}$ and an integer $k$, the \PbUniqueSubstr{} problem is to decide whether there is a unique substring of $\eds{S}$ with length at most $k$.
\end{problem}

\begin{theorem}
    \PbUniqueSubstr{} is NP-hard for $\sigma \ge 3$ and $r \ge 2$.
    \label{thm_mus_np_hard}
\end{theorem}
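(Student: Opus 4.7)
The plan is to reduce \textsc{3-SAT} to \PbUniqueSubstr{}. Given a formula $F = C_1 \wedge \cdots \wedge C_m$ on variables $x_1, \ldots, x_n$, I build an indeterminate string $\eds{S}$ over the alphabet $\{0, 1, \#\}$ (so $\sigma = 3$ and $r = 2$) and set $k := n$. The two kinds of building blocks are the \emph{reference block} $\eds{A}$ of length $n$ with $\eds{A}[j] = \{0, 1\}$ at every position, so $\Language(\eds{A}) = \{0, 1\}^n$, and, for each clause $C_i$, a \emph{violation block} $\eds{C_i}$ of length $n$ whose $j$-th symbol is $\{0\}$ (resp.\ $\{1\}$) when $x_j$ occurs positively (resp.\ negatively) in $C_i$ and equals $\{0, 1\}$ otherwise, so that $\Language(\eds{C_i})$ is exactly the set of assignments falsifying $C_i$. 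The naive layout $\eds{A}\#\eds{C_1}\#\cdots\#\eds{C_m}$ immediately gives the desired correspondence for pure bit-substrings, but short $\#$-substrings can be spuriously unique; to defeat this, I arrange the violation blocks along a closed walk $s_1, \ldots, s_L$ that is an Eulerian circuit on the complete digraph with self-loops over $\{1, \ldots, m\}$ in which every directed edge is doubled, so every ordered pair of clauses is consecutive at least twice, and I set $\eds{S} := \eds{A}\,\#\,\eds{C_{s_1}}\,\#\,\eds{C_{s_2}}\,\#\,\cdots\,\#\,\eds{C_{s_L}}$. The walk has length $L \in \mathcal{O}(m^2)$, hence $|\eds{S}| \in \mathcal{O}(nm^2)$.

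For the forward direction, any satisfying assignment $w \in \{0, 1\}^n$ appears in $\eds{A}$ at position $1$ and nowhere else: being $\#$-free it cannot cross a separator, and $w \notin \Language(\eds{C_{s_k}})$ for any $k$ because $w$ satisfies every clause. Hence $w$ is a unique substring of length $n = k$. For the contrapositive of the backward direction, assuming $F$ is unsatisfiable I show that every substring of length at most $n$ has at least two occurrences. A pure bit-substring of length $\ell < n$ already admits $n - \ell + 1 \ge 2$ starting positions inside $\eds{A}$. A pure bit-substring of length $n$ is some assignment $w$ that, by unsatisfiability, falsifies some $C_i$ and therefore also appears inside $\eds{C_i}$. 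A substring containing $\#$ must have the form $\alpha\#\beta$ with $|\alpha| + |\beta| \le n - 1$ since separators are $n + 1$ apart; completing $\alpha, \beta$ to a length-$n$ bit-string $\gamma = \beta\,\mu\,\alpha$ (any middle $\mu$ works because $|\alpha| + |\beta| < n$), unsatisfiability supplies indices $i, j$ with $\alpha$ a valid suffix of $\eds{C_i}$ and $\beta$ a valid prefix of $\eds{C_j}$, and by construction the ordered pair $(i, j)$ is adjacent via a $\#$ at least twice in the walk, yielding two occurrences of $\alpha\#\beta$.

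The delicate step is precisely the $\#$-substring case, which is what forces the doubled walk into the construction: even when $F$ is unsatisfiable, $\alpha$ may be a valid suffix of exactly one violation block and $\beta$ a valid prefix of exactly one, and these two blocks need not happen to be adjacent in a chosen linear ordering. A small parameter count shows that this degenerate situation is already consistent with unsatisfiability as soon as $n \ge 7$, because then the three fixed positions of a single clause can fit entirely within the suffix range of $\alpha$ and within the prefix range of $\beta$. The doubled Eulerian walk absorbs this at the cost of a polynomial blow-up, and verifying that the completion $\gamma$ together with the doubled multiplicity of the pair $(i, j)$ always produces the needed second occurrence is the step where the unsatisfiability of $F$ is used explicitly.
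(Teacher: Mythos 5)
Your reduction is correct, and its skeleton is the same as the paper's: per-clause indeterminate blocks of length $n$ encoding exactly the falsifying assignments, a fully free reference block with language $\{0,1\}^n$ that gives every length-$n$ bitstring at least one occurrence and every shorter bitstring at least two, and the equivalence ``unique length-$n$ bitstring $\Leftrightarrow$ satisfying assignment.'' Where you genuinely diverge is in neutralizing the separator-crossing substrings $\alpha\#\beta$. The paper simply appends $\edc{\$}\edc{0\\1}^{n}\edc{\$}\edc{0\\1}^{n-1}\edc{\$}\edc{0\\1}^{n-1}$, so that every string of length at most $n$ containing a separator occurs around each of the last two separators and is therefore never unique, \emph{independently of $F$}; the whole instance stays of size $\mathcal O(nm)$. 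You instead keep such strings potentially unique and rule them out only under unsatisfiability, via the completion $\gamma=\beta\mu\alpha$: since $\gamma$ falsifies some $C_i$, $\beta$ is a valid prefix and $\alpha$ a valid suffix of the \emph{same} block, and the doubled self-loop at $i$ in your Eulerian layout supplies the two occurrences. This is sound, but two remarks: only the self-loops $(i,i)$ are ever invoked by your completion argument (your worry about non-adjacent pairs $i\neq j$ dissolves once you note both roles are played by one block), so the doubled complete digraph and its $\mathcal O(nm^2)$ blow-up are overkill --- writing each block three times in a row would suffice; and when you linearize the closed walk you drop the wrap-around adjacency, so one ordered pair survives only once --- close the circuit by repeating its first block. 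These are cosmetic fixes; the paper's padding trick is simply the cheaper, $F$-independent way to achieve the same effect.
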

\begin{proof}
    Given a 3-CNF $F=C_1\land \dots\land C_m$ with $n$ variables and $m$ clauses. 
    We construct an indeterminate string $\eds{S}$ over the alphabet $\{0,1,\$\}$ such that there is a unique substring of $\eds{S}$ of length at most $n$ if and only if $F$ is satisfiable.
    We assume the variables are $x_1,\dots,x_n$.
    
    Let $\ell_a$, $\ell_b$, and $\ell_c$ be the literals of the clause $C_j = (\ell_a \lor \ell_b \lor \ell_c)$ such that $\ell_i$ is the literal of the variable $x_i$ for $i \in \{a,b,c\}$.
    For each $i$, we set $v_i = 0$ if $\ell_i$ is positive ($\ell_i = x_i$) and $v_i = 1$ otherwise ($\ell_i = \neg x_i$),
    and construct the following indeterminate string $\eds{T_j}$ based on the $v_i$'s.
    \[
        \eds{T_j} = \edc{0\\1}^{a-1} \edc{v_a} \edc{0\\1}^{b-a-1} \edc{v_b} \edc{0\\1}^{c-b-1} \edc{v_c} \edc{0\\1}^{n-c}.
    \]
    
    Then, $\eds{S}$ is given as follows:
    \[
        \eds{S} = \eds{T_1} \edc{\$} \dots \edc{\$} \eds{T_m} \edc{\$} \edc{0\\1}^n \edc{\$} \edc{0\\1}^{n-1} \edc{\$} \edc{0\\1}^{n-1}.
    \]
    $\eds{S}$ has $\mathcal O(nm)$ symbols because each $\eds{T_j}$ has $n$ symbols. 
    We make the following observations on $\eds{S}$:

    \begin{itemize}
        \item Each bitstring, i.e., a string of the alphabet $\{0,1\}$, of length $n$ occurs at least once, namely in the symbol $\edc{0\\1}^n$ at $\eds{S}[1 + (n+1)m]$.
        \item Each bitstring shorter than $n$ occurs more than once.
        \item No substring of length at most $n$ contains more than one `$\$$'.
        \item Every string of length at most $n$ containing a `$\$$' occurs more than once.
        \item A bitstring $B[1..n]$ of length $n$ encodes an assignment with $B[i] = 1$ if and only if $x_i$ is true.
            Such a bitstring $B$ occurs more than once in~\eds{S} if and only if its encoded assignment falsifies $F$. 
            Specifically, $B$ occurs in $\eds{T_i}$ if its assignment falsifies $C_i$.
    \end{itemize}
    Therefore, there can be no unique substring of length less than $n$, and a unique substring of length $n$ exists if and only if $F$ is satisfiable.
\end{proof}

\begin{example}\label{exMUS}
Given the 3-CNF $F=(x_1\lor\lnot x_2\lor x_3) \land (x_2\lor\lnot x_3\lor x_4)$ with $n = 4$ variables $x_1,x_2,x_3,x_4$.
The indeterminate string $\eds{S}$ for $F$
as described in the proof of Thm.~\ref{thm_mus_np_hard} is
\[
    \edc{0}\edc{1}\edc{0}\edc{0\\1}\edc{\$}\edc{0\\1}\edc{0}\edc{1}\edc{0}\edc{\$}\edc{0\\1}^4\edc{\$}\edc{0\\1}^3\edc{\$}\edc{0\\1}^3.
\]
The string $1000$ ($x_1=1, x_2=x_3=x_4=0$) is a MUS because it occurs only once in $\eds{S}$, all strings of length $3$ on the alphabet $\{0,1\}$ occur in $\eds{S}$.
Other strings of length 4 such as $0100$ and $00\$1$ occur twice.
Thus, $1000$ is one of the shortest MUS in $\eds{S}$.
\end{example}

\subsection{Hardness of Minimal Absent Word}
An \emph{absent word} of a classic string~$T$ is a non-empty string that is not a substring of $T$.
An absent word $X$ of $T$ is called a \emph{minimal absent word} (\emph{MAW}) of $T$ if all proper substrings of $X$ occur in $T$.
More general, we say that a string in an indeterminate string~\eds{S} is \emph{absent} if it does not occur in~\eds{S}.
An absent string~$X$ is \emph{minimal} in~\eds{S} if every proper substring of $X$ occurs in~\eds{S}.
In what follows, we show that the decision problem for finding an absent word of at most a specific length is NP-hard.

\begin{problem}[\PbAbsentWord{} problem]
    Given an indeterminate string $\eds{S}$ and an integer $k$, the $k$-\PbAbsentWord{} problem is to decide whether there is an absent word of $\eds{S}$ with length at most $k$.
\end{problem}

\begin{theorem}
    \PbAbsentWord{} is NP-hard for $\sigma \ge 3$ and $r \ge 3$.
    \label{thm_maw_np_hard}
\end{theorem}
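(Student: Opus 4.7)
The plan is to reduce 3-SAT to \PbAbsentWord{}, reusing the clause gadgets $\eds{T_j}$ from the proof of \cref{thm_mus_np_hard} but swapping the MUS-style appendix for one tailored to minimal absent words. Given a 3-CNF $F$ with $n$ variables and $m$ clauses, I would take each $\eds{T_j}$ to be the $n$-symbol indeterminate string whose language is exactly the set of length-$n$ bitstrings that \emph{falsify} $C_j$ (the same $\eds{T_j}$ as in \cref{thm_mus_np_hard}), and then set
\[
    \eds{S} \;=\; \eds{T_1}\edc{\$}\cdots\edc{\$}\eds{T_m}\edc{\$}\edc{0\\1\\\$}^{n-1}\edc{\$}\edc{0\\1\\\$}^{n-1}
\]
over the alphabet $\{0,1,\$\}$, so $\sigma = 3$ and the $\edc{0\\1\\\$}$-symbols realise $r = 3$. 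The intuition is that the clause part forces every assignment that falsifies some clause to appear as a length-$n$ bitstring, while the enriched tail eliminates every other candidate absent word of length at most $n$.

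The central step is to verify that the tail $X := \edc{0\\1\\\$}^{n-1}\edc{\$}\edc{0\\1\\\$}^{n-1}$ is rich enough. First, every string of length at most $n-1$ over $\{0,1,\$\}$ appears as a substring of some word in the language of one of the $\edc{0\\1\\\$}^{n-1}$ blocks, so no absent word of length smaller than $n$ survives. Second, every length-$n$ string $w$ over $\{0,1,\$\}$ that contains at least one $\$$ occurs in $X$: fix any index $i$ with $w[i]=\$$ and align $w$ so that position $i$ coincides with the explicit middle $\edc{\$}$ of $X$; the remaining $n-1$ positions of $w$ then fall onto $\edc{0\\1\\\$}$ symbols and can match any characters, including further $\$$'s. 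This alignment argument is the main obstacle of the proof, and it is exactly what forces $r\geq 3$: the MUS-style tail $\edc{0\\1}^{n-1}\edc{\$}\edc{0\\1}^{n-1}$ would leave, e.g., $\$\$$ itself absent for every $n\geq 2$, producing a spurious absent word regardless of whether $F$ is satisfiable.

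It remains to pin down where a length-$n$ bitstring can occur in $\eds{S}$. Any length-$n$ window that crosses a $\edc{\$}$-separator contains that separator, and every length-$n$ substring of $X$ straddles the middle $\$$; therefore a length-$n$ bitstring can appear only when aligned with an entire $\eds{T_j}$-block, which by construction happens iff the corresponding assignment falsifies $C_j$. Combining the two items, $\eds{S}$ has an absent word of length at most $n$ iff some $n$-bit assignment falsifies no clause of $F$, iff $F$ is satisfiable. Since $\|\eds{S}\|\in\mathcal{O}(nm)$ and $\eds{S}$ is computable from $F$ in polynomial time, this establishes NP-hardness for $\sigma\geq 3$ and $r\geq 3$.
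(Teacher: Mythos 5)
Your reduction is correct and essentially the same as the paper's: the same 3-SAT reduction with the same clause gadgets $\eds{T_j}$, followed by a tail that absorbs every string of length at most $n-1$ and every length-$n$ string containing a $\$$, so that the only candidate absent words of length at most $n$ are the length-$n$ bitstrings encoding satisfying assignments. The only difference is that the paper's final block is $\edc{0\\1}^{n-1}$ rather than $\edc{0\\1\\\$}^{n-1}$ (the paper then implicitly aligns the \emph{last} $\$$ of a candidate word with the middle separator); your symmetric variant is equally valid and slightly simplifies that alignment step without changing $\sigma$ or $r$.
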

\begin{proof}
Our idea is to follow the proof of \cref{thm_mus_np_hard}, except that we define $\eds{S}$ as
    \[
        \eds{S} = \eds{T_1} \edc{\$} \dots \edc{\$} \eds{T_m} \edc{\$}  \edc{0\\1\\\$}^{n-1} 
        \edc{\$} \edc{0\\1}^{n-1}.
    \]
$\eds{S}$ contains all strings from the alphabet $\{0,1,\$\}$ with a length of at most $n-1$ as a substring.
An absent word thus must be of length at least $n$.
Since any string of length at most $n$ containing a $\$$ occurs also in the last three symbols of $\eds{S}$,
the search for an absent word of length at most $n$ boils down to finding a bitstring of length $n$.
Each bitstring~$B$ encodes an assignment. 
If the assignment does not satisfy clause $C_i$, then $B$ matches $\eds{T_i}$.
On the one hand, if there is no satisfying assignment for the CNF, then all substrings of length $n$ with alphabet $\{0,1\}$ must be present in $\eds{S}$.
In total, there is no absent word of length at most $n$ in $\eds{S}$.
On the other hand, if there is a satisfying assignment, there is also an absent word of length $n$, which is a bitstring encoding this assignment. 
\end{proof}

\begin{example}
We reuse the 3-CNF $F$ from \cref{exMUS}, i.e.,
$F=(x_1\lor\lnot x_2\lor x_3) \land (x_2\lor\lnot x_3\lor x_4)$.
The indeterminate string $\eds{S}$ for $F$
as described above is
\[
    \eds{S} =
    \edc{0}\edc{1}\edc{0}\edc{0\\1}\edc{\$}\edc{0\\1}\edc{0}\edc{1}\edc{0}\edc{\$}
    \edc{0\\1\\\$}^3\edc{\$}\edc{0\\1}^3.
\]
The string $1000$ ($x_1=1, x_2=x_3=x_4=0$) does not occur in $\eds{S}$.
However its longest proper prefix $100$ and suffix $000$ occur each in $\eds{S}$.
Hence, $1000$ is a MAW\@.
It is also a shortest MAW since each string of length 3 appears in $\eds{S}$.
\end{example}

The \PbAbsentWord{} problem is FPT in the size $s$ of the language $\Language(\eds{S})$.
The algorithm of~\cite{okabe23lineartime} computes a generalized MAW of a set of strings $\mathcal S$ in $\mathcal O(m)$ time, where $m$ is the total length of all strings in $\mathcal S$.
Here, a \emph{generalized MAW} is a string that is a MAW for all strings in $\mathcal S$.
Hence, computing the generalized MAW of $\Language(\eds{S})$ takes $\mathcal O(n s)$ time, and the computed generalized MAW is a MAW of $\eds{S}$.

\subsection{SAT Formulation}
In what follows, we present  SAT formulations for computing a unique substring or an absent word of a specified length~$x$ in an indeterminate string.
The SAT formulations can be transformed into a MAX-SAT formulation by an optimizing objective on $x \in [1..n]$.
For the formulations, we represent a symbol of an indeterminate string $\eds{T}[1..n]$ as a list of characters such that $\eds{T}[i]$ is a list.
We let $\eds{T}[i][k]$ denote the $k$-th character in the list $\eds{T}[i]$ and $|\eds{T}[i]|$ the number of characters stored in $\eds{T}[i]$.
We interpret the Boolean variables true and false as integers 1 and 0 to allow us to use expressions such as summations.
We model our answer as a string $X[1..x] \in \Sigma^x$ of length $x$. 
The length $x$ can be given (decision problem) or is object to optimization
to obtain the shortest length. We can model $x$ by $n$ Boolean variables $x_i$ as follows.

\begin{minipage}{0.45\linewidth}
\begin{align}\label{eqLENX}
\sum_{i=1}^n x_i = 1.
\tag{LENX}
\end{align}
\end{minipage}
\begin{minipage}{0.45\linewidth}
\begin{align}\label{eqMINX}
    \min \{i \in [1..n] : x_i = 1 \}.
\tag{MINX}
\end{align}
\end{minipage}

In the following, variables that are not defined are assumed to be false (not set).

\subsubsection{Shortest Minimal Absent Word (MAW)}\label{secSATMAW}
\begin{minipage}{0.5\linewidth}
    $X$ can be expressed by $x\sigma$ Boolean variables arranged in a matrix $X'[1..x][1..\sigma]$ choosing characters from $\Sigma = \{1,\ldots,\sigma\}$.
\end{minipage}
\begin{minipage}{0.5\linewidth}
\begin{align}\label{eqSETX}
\forall \ell \in [1..x]: \sum_{c=1}^{\sigma} X'[\ell,c] = 1.
\tag{SETX}
\end{align}
\StateComplexity{$\mathcal O(x)$, $\mathcal O(\sigma)$}
\end{minipage}

The gray curly brackets denote two asymptotic upper bounds, the first is on the number of generated clauses and the second on the maximum size of each such clause.
Here, we generate for each $X$-position a clause of size $\sigma$.

To check whether $X$ is absent, we need to verify that there is no occurrence of $X$ starting at any text position $t \in [1..n-x+1]$ of $\eds{T}$.
In other words, we check for each such $t$ that there is an $X$-position $\ell \in [1..x]$ such that $X[\ell] \not\in \eds{T}[t+\ell-1]$.
To express that as a formula, we create a three-dimensional grid of boolean variables $M[k,t,\ell]$ modelling these mismatches.
We set $M[k,t,\ell]$ to true if $X[\ell] \neq \eds{T}[t+\ell-1][k]$, where $\eds{T}[t+\ell-1][k]$ denotes the $k$-th character in the list $\eds{T}[t+\ell-1]$, if it exists.
\begin{align}\label{eqM}
    \forall \ell \in [1..x], t \in [1..n-x+1], k \in [1..|\eds{T}[t+\ell-1]|]: X[\ell] \neq \eds{T}[t+\ell-1][k] \implies M[k,t,\ell].
    \tag{M}
\end{align}
\StateComplexity{$\mathcal O( x n r)$, $\mathcal O(1)$}
Recall that $r = r(\eds{T}) \le \sigma$ is the maximum number of characters a list $\eds{T}[i]$ can store.
Next we reduce $M[k,t,\ell]$ to $M'[t,\ell]$ by setting $M'[t,\ell]$ to true only if $X[\ell]$ mismatches with all characters in the list $\eds{T}[t+\ell-1]$, i.e.,
\begin{align}\label{eqM'}
    \forall \ell \in [1..x], t \in [1..n-x+1]:
    \sum_{k=1}^{|\eds{T}[t+\ell-1]|} M[k,t,\ell] = |\eds{T}[t+\ell-1]| \implies M'[t,\ell].
    \tag{M'}
\end{align}
\StateComplexity{$\mathcal O(xn)$, $\mathcal O(r)$}

So $M'[t,\ell]$ is true if and only if $X[\ell] \not\in \eds{T}[t+\ell-1]$.
Finally, we require that there is some $X$-position $\ell$ for which we cannot match $X[\ell]$ with any of the characters of $\eds{T}[t+\ell-1]$, i.e.,
\begin{align}\label{eqCONS}
    \forall t \in [1..n-x+1]: \sum_{\ell=1}^{x} M'[t,\ell] \neq 0.
    \tag{CONS}
\end{align}
\StateComplexity{$\mathcal O(n)$, $\mathcal O(x)$}
If Eq.~(\ref{eqCONS}) holds, then $X$ must be an absent word.
In total, we have $x\sigma + n$ selectable Boolean variables.
The number of clauses is in $\mathcal O(xnr)$.
The same upper bound has the size of the CNF being the summation of the sizes of all clauses.

Listing~\ref{lstMAW} shows an encoding in answer set programming (ASP), where each line is commented by one of the above equations.

\lstinputlisting[label=lstMAW, caption={ASP encoding computing a MAW. The last line (command \texttt{show}) is for the output.}]{slimprintlp/maw.lp}

\begin{table}[t]
    \begin{minipage}{0.5\linewidth}

\begin{tabular}{r*{6}{r}}
\toprule
		& \multicolumn{2}{c}{MAW} & \multicolumn{2}{c}{MUS} \\
\cmidrule{2-3} \cmidrule{4-5}
$n$ & time & $x$ & time & $x$
\\\midrule
\num{10} & \num{0.02} & \num{2} & \num{0.03} & \num{2} \\
\num{100} & \num{169.62} & \num{4} & \num{171.60} & \num{3} \\
\num{150} & \num{566.88} & \num{4} & \num{580.68} & \num{4} \\
\num{200} & \num{1365.87} & \num{4} & \num{1376.11} & \num{4} \\
\num{250} & \num{2615.42} & \num{5} & \num{2708.00} & \num{5} \\
 \end{tabular}

     \end{minipage}
    \hfill
    \begin{minipage}{0.45\linewidth}
        \caption{Computation of a shortest MAW and MUS on a randomly generated indeterminate string~\eds{S} with $\sigma = 4$ and $r(\eds{S}) = 2$. 
        Time is in seconds, and $n$ is the length of the prefix~\eds{T} of \eds{S} taken as input for computation. }
        \label{tabASP}
    \end{minipage}
\end{table}

\subsubsection{Shortest Minimal Unique Substring (MUS)}
To find a unique substring of length $x \in [1..n]$, we slightly modify the above SAT formulation by representing $X$ differently as follows.
We first select a starting position $p \in [1..n]$ of $X$ with $n$ Boolean variables $p_i$ with only one set that marks the start of $X$ in~\eds{T}, i.e.,

\begin{align}\label{eqPOSP}
\sum_{i=1}^n p_i = 1.
\tag{POSP}
\end{align}

Each character of $X$ can be chosen such that $X[\ell] \in \eds{T}[p+\ell-1]$ for every $\ell \in [1..x]$.
For that we create a two-dimensional table $X'[\ell,k]$ such that $X'[\ell,k] = 1$ if and only if $X[\ell] = \eds{T}[p+\ell-1][k]$.
This can be modelled by the following constraint.
\begin{align}\label{eqSELX}
    \forall \ell \in [1..x]: \sum_{k=1}^{|\eds{T}[p+\ell-1]|} X'[\ell,k] = 1.
\tag{SELX}
\end{align}
\StateComplexity{$\mathcal O(x)$, $\mathcal O(r)$}

Then we define $M$ and $M'$ as above but omit in Eq.~(\ref{eqCONS}) the starting position $p$ of $X$ in the range for $t \in [1..n-x+1] \setminus \{p\}$.
The asymptotic complexity of the generated CNF compared to \cref{secSATMAW} is that the number of selectable variables is $2n + rx$, which can be fewer for large $x \in \Theta(n)$ and small $r$.
Listing~\ref{lstMUS} shows an encoding in ASP\@.
An evaluation on a randomly generated string is shown in \cref{tabASP}.
The evaluation ran on an Intel Xeon Gold 6330 CPU with Ubuntu 22.04 and clingo version 5.7.1 for interpreting the ASP encoding.

\lstinputlisting[label=lstMUS, caption={ASP encoding computing a MUS.}]{slimprintlp/mus.lp}

\section{Hardness of $k$-\antipower}
A factorization of a classic string $T$ is a partition of $T$ into factors $F_1, \ldots, F_k$ such that their concatenation gives $T$, i.e., $F_1 \cdots F_k = T$.
$T$ has an \emph{anti-power} of order $k$ if $T$ can be factorized into $k$ factors of equal length such that all factors are distinct.
As a generalization, we say that a GD string has a \emph{$k$-anti-power} if its language contains a $k$-anti-power.

\begin{problem}[\antipower{} problem]\label{pbAntiPower}
Given a GD string $\eds{S}$ and an integer $k$, the \antipower{} problem is to decide whether $\eds{S}$ has an anti-power of order $k$.
\end{problem}

If the cardinality of $\Language(\eds{S})$ is polynomial in the length $m (= \sum_{i=1}^{n} |\eds{S}[i][1]|)$ of a string in $\Language(\eds{S})$, 
then we can try to compute a $k$-anti-power in $\mathcal O(m^2/k)$ time with the algorithm of~\cite{badkobeh18antipower} for every string in $\Language(\eds{S})$.
This gives an algorithm for $k$-\antipower{} whose time is polynomial in $m$.
However, the cardinality of $\Language(\eds{S})$ can be exponential.

\begin{theorem}
    \antipower is NP-hard for $\sigma \ge 2$ and $r \ge n$.
    \label{thm_antipower_np_hard}
\end{theorem}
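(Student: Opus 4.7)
The plan is to reduce from \PbHamiltonianPath{} on directed graphs. Given a directed graph $G=(V,E)$ with $|V|=n$, I would construct in polynomial time a GD string $\eds{S}$ over the binary alphabet and set $k=n$ such that $\eds{S}$ has an anti-power of order $k$ if and only if $G$ admits a Hamiltonian path. The condition $r \geq n$ is met because certain symbols would carry one option per vertex.

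The construction encodes each vertex $v \in V$ as a bitstring $\text{enc}(v) \in \{0,1\}^{b}$ with $b = \lceil \log n \rceil$. The GD string $\eds{S}$ alternates two kinds of symbols: a \emph{vertex symbol} of length $b$ whose options are $\{\text{enc}(v) : v \in V\}$, and an \emph{edge symbol} of length $2b$ whose options are $\{\text{enc}(u)\text{enc}(v) : (u,v) \in E\}$. I would arrange $\eds{S}$ as a vertex symbol, followed by $n-1$ edge symbols, followed by a vertex symbol, giving a total length of $2nb$. Setting the block length $L = 2b$ partitions the string into exactly $n$ blocks, each straddling two consecutive symbols, so that every interior block is of the form (target of edge $j-1$, source of edge $j$), while the first and last blocks combine a vertex symbol with half of an adjacent edge symbol.

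For the forward direction, a Hamiltonian path $v_1, \ldots, v_n$ would correspond to the selection $\text{enc}(v_1)$ in the start vertex symbol, $\text{enc}(v_{i-1})\text{enc}(v_i)$ in the $i$-th edge symbol, and $\text{enc}(v_n)$ in the end vertex symbol. Every block then evaluates to $\text{enc}(v_j)\text{enc}(v_j)$ for some $j \in [1..n]$, and block distinctness becomes equivalent to distinctness of $v_1, \ldots, v_n$; together with the edge constraints baked into the symbol options, this is exactly the Hamiltonian property.

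The hardest part will be the converse: an arbitrary $n$-anti-power of the above construction need not correspond to a chained Hamiltonian selection, because block distinctness does not by itself enforce the equality (target of edge $j$)~$=$~(source of edge $j+1$) across independently chosen symbols. To close this gap I would augment the basic construction with consistency gadgets---auxiliary symbols whose options are designed so that any non-chaining selection introduces a collision between blocks elsewhere in $\eds{S}$ and thereby destroys the anti-power property. A concrete candidate is to prepend and interleave fixed fingerprint patterns together with position-indexed markers inside the edge symbols, so that the only way to keep all $n$ blocks distinct is to force blocks of the form $\text{enc}(v_j)\text{enc}(v_j)$. The technical delicacy lies in engineering these gadgets to reject exactly the non-chained selections while admitting every legitimate Hamiltonian one, and in verifying that they blow up the construction only by a polynomial factor; this is where I expect to spend most of the effort.
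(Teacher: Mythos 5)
Your reduction is from \PbHamiltonianPath{} as in the paper, but the construction is genuinely different and, as written, has a gap exactly where you flag it: the converse direction. In your layout (vertex symbol, $n-1$ edge symbols, vertex symbol, block length $2b$), nothing prevents choosing the edge symbols independently so that the target of edge $j$ differs from the source of edge $j+1$; the resulting ``mixed'' blocks $\mathrm{enc}(u)\,\mathrm{enc}(w)$ with $u\neq w$ can easily all be pairwise distinct, so a graph with enough edges admits an $n$-anti-power without having a Hamiltonian path. The ``consistency gadgets'' you defer to are not a routine technicality --- they are the entire difficulty of the reduction, and you do not construct them, so the proof is incomplete.

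It is instructive to compare with how the paper sidesteps this. Rather than restricting the symbol options to the edge set and then trying to enforce chaining, the paper makes chaining automatic and enforces the edge constraint \emph{negatively}. The walk is encoded by symbols of the form $\edc{c_1^4\\\vdots\\c_n^4}$ (with $c_i^3$ at the two ends): because each option is a single character repeated, the factor straddling two consecutive symbols is $c_a c_b$ where $a$ is the vertex of the previous symbol and $b$ that of the next --- no independent ``source/target'' choice exists to go inconsistent. Repeated vertices are killed because $c_ic_i$ then occurs twice as a factor. The edge constraint is imposed by prepending, for every \emph{non}-edge $\{u,v\}$, the plain string $c_uc_vc_vc_u$: a walk step along a non-edge produces a factor $c_uc_v$ (or $c_vc_u$) that collides with this prefix, destroying the anti-power, while the order $k$ is tuned so that every string in the language has length exactly $2k$ and all factors have length $2$. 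If you want to salvage your approach, you would need gadgets achieving something equivalent; the paper's trick of listing the complement of $E$ as forced duplicate factors is one concrete way to do it and is probably the cleanest fix.
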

\begin{proof}
    We reduce from \PbHamiltonianPath{}, which belongs to one of Karp's classic NP-hard problems~\cite{karp72reducibility}.
    Given an undirected graph $G=(V,E)$ with $n=|V|\geq 2$ without self-loops.
    We construct a GD string $\eds{S}$ such that $\eds{S}$ has an anti-power of order $k = 2(\binom{n}{2}-|E|+n)-1$ if and only if $G$ has a Hamiltonian path.
    For this, we assume that we can enumerate the vertices such that $V=\{1,\dots,n\}$.
    Then we can identify each vertex $v_i \in V$ with a character $c_{v_i}$ of an alphabet $\Sigma=\{c_1,\dots,c_n\}$.
    Let $\overline{E}=\binom{V}{2}\setminus E=\{e_1,\dots,e_{|\overline E|}\}$ be the complement of $E$, again without self-loops.
    For each $e_i=\{u_i,v_i\} \in \overline{E}$ we define the string 
    $S_i = c_{u_i}c_{v_i} \cdot c_{v_i}c_{u_i} \in \Sigma^4$.
    Then define the GD string $\eds{S}$ as
    \[
        \eds{S} = S_1 \cdots S_{|\overline{E}|}\edc{c_1^3\\\vdots\\c_n^3}\edc{c_1^4\\\vdots\\c_n^4}^{n-2}\edc{c_1^3\\\vdots\\c_n^3}.
        \text{\hspace{1em}We make the following observations on $\eds{S}$.}
    \]
    \begin{itemize}
        \item 
            The length of each string of the language $\Language(\eds{S})$ is $4(\binom{n}{2}-|E|+n)-2$ due to the following two reasons, where we split the analysis by the prefix of the $S_i$'s and the remaining suffix.
            First, the cardinality of $\overline{E}$ is $\binom{n}{2} - |E|$ and therefore the summation of the lengths of all $S_i$'s accumulates to $4|\overline{E}| = 4(\binom{n}{2}-|E|)$. 
            The remaining suffix has length $3+4(n-2)+3 = 4n - 2$.
            The summation of both is $4(\binom{n}{2}-|E|+n) - 2$, which is $2k$.
        \item A $k$-anti-power $X$ of $\eds{S}$ is a member of $\Language(\eds{S})$, and thus must have length $2k$. 
            In particular, $X$ consists of $k$ factors of equal length $2$.
            By construction, each factor of $X$ starts at an odd position in \eds{S}.
\item Every string $Y \in \Language(\eds{S})$ corresponds to a walk~$W[1..n]$ over $n$ nodes in the complete graph $(V, \binom{V}{2})$ without self-loops, 
            and vice versa:
To understand that, let us focus on the suffix $\eds{S}[4|\overline E|+1..]$ directly after the sequence of $S_i$'s.
    For each $j \in [1..n]$, let $s_j=1 + 4|\overline E| + 4(j-1)$. 
    Then $Y[s_j..s_j+1]$ is of the form $c_ic_i$ for some $i \in [1..n]$ and encodes that the $j$-th visited node of $W$ is node $i$, i.e., $W[j] = i$.
    By construction for $j < n$, $Y[s_j+2..s_j+3]$ is of the form $c_ac_b$ where $W[j] = a$ and $W[j+1] = b$ are the $j$-th and $(j+1)$-st node on $W$, respectively.
    Now take a walk~$W[1..n]$ over $n$ nodes represented by a string $Y$.
        $W$ is a permutation if and only if it is Hamiltonian (pigeonhole principle).
        \item Iff $W$ is not Hamiltonian, $W$ visits at least one node $i$ more than once.
            In this case, $c_ic_i$ occurs twice in~$Y$, starting at odd positions. Therefore, $Y$ cannot be a $k$-anti-power.
        \item Iff $W$ uses an edge from $u$ to $v$ with $u \neq v$, there is some $j$ such that $Y[s_j..s_j+5] = c_uc_uc_uc_vc_vc_v$.
            Assume that this edge $\{u,v\}$ is not in $E$.
            Since $\{u,v\}\in\overline E$, $c_uc_v$ also occurs starting at some odd $p< 4|\overline E|$ (i.e.\ in the part of $Y$ generated by an $S_i$).
            Therefore, $Y$ cannot be a $k$-anti-power.
        \item Because of the previous two statements, a Hamiltonian path in $(V,E)$ corresponds to a $k$-anti-power. Since every walk of length~$n$ (and thus every Hamiltonian path) corresponds to a string in $\Language(\eds{S})$, the converse is also true.
    \end{itemize}
    Finally, we can generalize the proof to GD strings over the binary alphabet $\{0,1\}$ by replacing each $c_i$ with a unique bitstring ($\in \{0,1\}^*$) of length $\lceil\log_2 n\rceil$.
    The anti-power we aim to compute is still of order~$k$, but each factor then contains $\lceil\log_2 n\rceil$ bits.
\end{proof}

\begin{example}
    Consider the following graph, with characters 
    $V = \{ \texttt{a}, \texttt{b}, \texttt{c}, \texttt{d} \}$ 
    instead of numbers as node labels for readability.

    \begin{minipage}{0.2\linewidth}
    \tikzset{every picture/.style={/utils/exec={\ttfamily}}}
\begin{tikzpicture}
    \graph { a -- {b, c -- d}, b -- c};
\end{tikzpicture}
    \end{minipage}
    \hfill
    \begin{minipage}{0.75\linewidth}
The set of edges is $E = \{\{\texttt{a},\texttt{b}\},\{\texttt{a},\texttt{c}\}, \{\texttt{b},\texttt{c}\}, \{\texttt{c},\texttt{d}\}\}$.
Its complement, omitting self-loops, is
$\overline E= \{\{\texttt{a},\texttt{d}\},\{\texttt{b},\texttt{d}\}\}$.
The corresponding GD string is
\newcommand{\aedc}[1]{\texttt{\textcolor{solarizedBlue}{#1}\textcolor{solarizedYellow}{#1#1}\textcolor{solarizedBlue}{#1}}}
\newcommand{\bedc}[1]{\texttt{\textcolor{solarizedYellow}{#1#1}\textcolor{solarizedBlue}{#1}}}
\newcommand{\cedc}[1]{\texttt{\textcolor{solarizedBlue}{#1}\textcolor{solarizedYellow}{#1#1}}}
\newcommand{\eedc}[1]{\edc{\texttt{\textcolor{solarizedBlue}{#1}}}}
\[
    \eds{S} = \eedc{ad}\eedc{da}\eedc{bd}\eedc{db}\edc{\bedc{a}\\\bedc{b}\\\bedc{c}\\\bedc{d}}\edc{\aedc{a}\\\aedc{b}\\\aedc{c}\\\aedc{d}}\edc{\aedc{a}\\\aedc{b}\\\aedc{c}\\\aedc{d}}\edc{\cedc{a}\\\cedc{b}\\\cedc{c}\\\cedc{d}}.
\]
\end{minipage}

There are four anti-powers of $\eds{S}$ of order $k=11$, namely

\begin{itemize}
\item 
\texttt{\textcolor{solarizedBlue}{addabddb}\textcolor{solarizedYellow}{aa}\textcolor{solarizedBlue}{ab}\textcolor{solarizedYellow}{bb}\textcolor{solarizedBlue}{bc}\textcolor{solarizedYellow}{cc}\textcolor{solarizedBlue}{cd}\textcolor{solarizedYellow}{dd}}
for $\texttt{a} \to \texttt{b} \to \texttt{c} \to \texttt{d}$,
\item 
\texttt{\textcolor{solarizedBlue}{addabddb}\textcolor{solarizedYellow}{bb}\textcolor{solarizedBlue}{ba}\textcolor{solarizedYellow}{aa}\textcolor{solarizedBlue}{ac}\textcolor{solarizedYellow}{cc}\textcolor{solarizedBlue}{cd}\textcolor{solarizedYellow}{dd}}
for $\texttt{b} \to \texttt{a} \to \texttt{c} \to \texttt{d}$,
\item 
\texttt{\textcolor{solarizedBlue}{addabddb}\textcolor{solarizedYellow}{dd}\textcolor{solarizedBlue}{dc}\textcolor{solarizedYellow}{cc}\textcolor{solarizedBlue}{ca}\textcolor{solarizedYellow}{aa}\textcolor{solarizedBlue}{ab}\textcolor{solarizedYellow}{bb}}
for $\texttt{d} \to \texttt{c} \to \texttt{a} \to \texttt{b}$, and
\item 
\texttt{\textcolor{solarizedBlue}{addabddb}\textcolor{solarizedYellow}{dd}\textcolor{solarizedBlue}{dc}\textcolor{solarizedYellow}{cc}\textcolor{solarizedBlue}{cb}\textcolor{solarizedYellow}{bb}\textcolor{solarizedBlue}{ba}\textcolor{solarizedYellow}{aa}}
for $\texttt{d} \to \texttt{c} \to \texttt{b} \to \texttt{a}$.
\end{itemize}
From each anti-power we can construct a Hamiltonian path by reading each same-character pair in yellow color
({\color{solarizedYellow}$\blacksquare$})
.
\end{example}

\section{Hardness of Longest Previous Factor}

The \PbLongestPrevious{} problem on a classic string $T[1..n]$ is to compute, 
for a given position $T[i]$, the length of a substring starting before $i$ that shares the longest common prefix with $T[i..]$, i.e.,
\[
    \mathsf{LPF}[i] := \max \{ \ell \in [0..n-i+1]  \mid T[j..j+\ell-1] = T[i..i+\ell-1] \wedge j < i\}.
\]
$\mathsf{LPF}[i]$ is the length of the \emph{longest previous factor (LPF)} of $T[i]$.
We can translate this problem to ED strings by a weak and a strong variant.
Given a text-position $(i,j,k)$ in an ED string~\eds{S}, 
the weak variant states that an LPF is the longest string that has an occurrence at $(i,j,k)$ and an earlier occurrence at $(i',j',k') < (i,j,k)$,
where $(i',j',k') < (i,j,k)$ if $i' < i$ or ($i' = i$, $j' = j$, and $k' < k$).
Informally, the strong variant requires the LPF to choose the same path through~\eds{S} in case it takes characters from $\eds{S}[i]$.
More technically, 
the LPF of $(i,j,k)$ is the LPF at position $x$ of a classic string $X$ in the language~$\Language(\eds{S})$ such that $X$ covers $(i,j,k)$ at position $x$.
With covering we mean that there is a factorization $X = X_1 \cdots X_n$ such that each $X_z$ matches with $S_z$ and $X[i] = \eds{S}[i][j]$ with $|X_1 \cdots X_{i-1}| + k = x$.
Consider for example
\(
\eds{S} = 
\edc{\texttt{ab}}\edc{\texttt{abc}\\\texttt{c}}.
\)
In the strong variant, the LPF of text-position $(2,1,1)$ is $\texttt{ab}$ because there is only one string $X = \texttt{ababc} \in \Language(\eds{S})$ covering $(2,1,1)$ at position~$3$, and the LPF of $X[3]$ is $\texttt{ab}$.
In the weak variant, $\texttt{abc}$ is also permissible.
The weak variant can be reduced to computing LCEs, and therefore is solvable in quadratic time (cf.~\cref{secLRF}).
The variants only emerge for ED strings; computing the LPF in a GD string can be treated as in the weak variant.
This strong variant is at least as hard as computing the maximum length~$\ell$ of the LPFs of the strings in $\Language(\eds{S})$, i.e.,\ 
\(
    \ell = \max\{\max\{\mathsf{LPF}(T)[i] \mid 1\leq i\leq |T|\} \mid T \in\Language(\eds{S})\}.
\)
We now show that computing $\ell$ is NP-hard, formalized by the following decision problem:

\begin{problem}[\PbLongestPrevious{} problem]
    Given an ED string $\eds{S}$ and an integer $k$, the \PbLongestPrevious{} problem is to decide whether the longest LPF in $\Language(\eds{S})$ is at least as long as $k$.
\end{problem}

Like \cref{pbAntiPower}, \PbLongestPrevious{} can be answered in polynomial time if $\Language(\eds{S})$ has polynomial cardinality
because we can compute the LPF array for each string in $\Language(\eds{S})$ in linear time to its length~\cite{crochemore08lpf}.

\begin{theorem}
    \PbLongestPrevious{} is NP-hard for $\sigma \ge 3$ and $r \ge \sigma+1$.
    \label{thm_lpf_np_hard}
\end{theorem}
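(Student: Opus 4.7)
The plan is to reduce from the \PbCommonSubsequence{} problem for $m$ strings, which is NP-hard (Maier, 1978): given strings $T_1,\ldots,T_m$ over an alphabet $\Sigma_0$ and an integer~$k$, decide whether some $w\in\Sigma_0^k$ is a subsequence of every $T_i$. Taking $\Sigma_0=\{0,1\}$ together with a fresh separator $\$\notin\Sigma_0$ handles the constraint $\sigma\geq 3$, while the constraint $r\geq \sigma+1$ is accommodated by a single padding symbol that lists all $\sigma$ alphabet letters together with~$\epsilon$. For each~$T_i$ I would construct the ED ``pick-or-skip'' gadget
\[
\eds{U_i}=\edc{T_i[1]\\\epsilon}\edc{T_i[2]\\\epsilon}\cdots\edc{T_i[|T_i|]\\\epsilon},
\]
whose language is exactly the set of subsequences of $T_i$, and then form the instance $\eds{S}=\eds{U_1}\edc{\$}\eds{U_2}\edc{\$}\cdots\edc{\$}\eds{U_m}$ with LPF threshold $L=(m-1)(k+1)-1$.

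A realization of $\eds{S}$ is a classic string $T=s_1\$s_2\$\cdots\$s_m$ with each $s_i$ a subsequence of $T_i$. For the \emph{forward} direction, if $w$ is a common subsequence of length $k$, realize $s_i=w$ for every $i$; the resulting $T=(w\$)^{m-1}w$ is periodic with period $k+1$, so $T[1..L]=T[k+2..k+1+L]$ and $\mathsf{LPF}(T)[k+2]\geq L$. For the \emph{backward} direction, suppose a realization $T$ admits some position $p$ with $\mathsf{LPF}(T)[p]\geq L$, so $T[j..j+L-1]=T[p..p+L-1]$ for some $j<p$. The key observation is that the $\$$-letters in the two matched occurrences must sit at identical relative offsets; this forces the shift $d=p-j$ to equal $|s_i|+1$ for every pair of consecutive $\$$-letters falling inside the match. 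Because $L$ is large enough that $m-2$ consecutive separators lie inside both occurrences, this pins down $|s_2|=\cdots=|s_{m-1}|=k$ and $d=k+1$, and propagating the content equalities between the separators yields $s_1=s_2=\cdots=s_m=w$ for some $w\in\Sigma_0^k$. Since each $s_i$ is a subsequence of $T_i$, $w$ is a length-$k$ common subsequence of all the $T_i$'s.

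The main obstacle will be excluding \emph{spurious} ways to reach $\mathsf{LPF}\geq L$ that do not encode a global common subsequence, namely (i) a long internal repetition inside a single block $s_i$, and (ii) a match entirely contained in two adjacent blocks $s_i\$s_{i+1}$ that only certifies a common substring between them. I would handle both by padding each $T_i$ with a short non-self-similarity gadget over the extra alphabet letters (this is precisely where the bound $r\geq \sigma+1$ is needed, since the padding symbol must pool all $\sigma$ letters together with $\epsilon$), chosen so that every realized block is internally repetition-free on the scale of $k$ and no two adjacent blocks can share a substring of length $L$ unless they arise from the globally periodic all-equal realization. With this safeguard in place, the $\$$-alignment argument above yields the desired iff and the reduction runs in polynomial time, establishing NP-hardness of \PbLongestPrevious{}.
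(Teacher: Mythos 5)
Your starting point matches the paper's: both reduce from \PbCommonSubsequence{}, both use the pick-or-skip gadget $\edc{T_i[1]\\\epsilon}\cdots\edc{T_i[|T_i|]\\\epsilon}$ whose language is exactly the set of subsequences of $T_i$, and both want to force any sufficiently long repeated factor to be $(k+1)$-periodic with $\$$ at every $(k+1)$-st position. The genuine gap sits exactly where you flag ``the main obstacle'': the bare construction $\eds{U_1}\edc{\$}\cdots\edc{\$}\eds{U_m}$ has no mechanism to exclude spurious long matches, and the fix you defer to cannot work as described. The blocks $s_i$ are adversarially chosen subsequences, so (i) if some $T_i$ is long enough it admits a highly periodic subsequence (e.g.\ $0^{j}$ exists as a subsequence of essentially any long binary string, square-free or not), and a single block then realizes $\mathsf{LPF}\ge L$ independently of whether a common subsequence of length $k$ exists; (ii) when the $|s_i|$ are not all equal to $k$, nothing pins the shift $d$ to $k+1$, and a shift spanning two blocks only certifies agreement between alternate blocks, not a global common subsequence. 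Any ``non-self-similarity padding'' inserted into the gadgets is itself optional (every symbol contains $\epsilon$), so the adversary simply deletes it; making every \emph{realized} block repetition-free on the scale of $k$ is unachievable for subsequence languages. Also, your padding symbol of size $\sigma+1$ never actually appears in the construction you wrote down, so the stated bound $r\ge\sigma+1$ is not yet realized.

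The paper closes this gap with a forcing prefix rather than a gadget: it prepends $\left(\edc\$\Sigma^k\right)^\ell$ (with $\Sigma$ read as the single symbol containing all $\sigma$ letters together with $\epsilon$ --- this is where $r\ge\sigma+1$ enters) and appends $\edc\$\Sigma^k\edc\$$, choosing $\ell\ge3$ minimal with $k(\ell-2)>\sum_i|T_i|$. This does two things at once. First, the only previous occurrence of $\$$ for the second $\$$ is at position $1$, so the previous factor overlaps itself and is forced to have period $k+1$ with a $\$$ every $k+1$ characters; reaching the target length $(k+1)(f+\ell)+1$ then propagates one word $W\in\Sigma^k$ through all the subsequence gadgets, making $W$ a common subsequence. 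Second, by the choice of $\ell$ the combined size of all gadgets is below $k(\ell-2)$, so any occurrence starting at or after the gadget region is arithmetically too short to reach the threshold --- this kills both of your spurious cases by a pure counting argument, with no assumption on the internal structure of the $T_i$'s. You would need to add this (or an equivalent forcing prefix plus length accounting) for your backward direction to go through.
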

\begin{proof}
    We reduce from \PbCommonSubsequence{}.
    Let $\mathcal S$ be a set of strings and $k\geq 1$ an integer.
    \PbCommonSubsequence{} asks whether there is a string of length $k$ that is a subsequence common to all strings in $\mathcal S$. 
    \PbCommonSubsequence{} is NP-hard~\cite{maier78complexity} in the cardinality of $\mathcal S$ for alphabet sizes $\sigma \ge 2$.

    The general idea is to construct an ED string such that every sufficiently long repeated factor~$F$ overlaps and therefore is periodic, 
    such that $F$'s root is a subsequence of length $k$ (modulo a separator symbol) of some or all strings in $\mathcal S$.
    Here, a string~$S$ is called \emph{periodic} if the largest possible rational number $k \ge 1$ with $S = R^k$ for a string $R$ is at least two; in that case we call $R$ the \emph{root} of $S$.
    In what follows, we show that the longest such factor~$F$ is a subsequence of length $k$ common to \emph{all} strings in $\mathcal S$.

    Assume that all strings in $\mathcal S = \{S_1,\dots, S_f\}$ are over the alphabet $\Sigma$. With a slight abuse of notation we consider $\Sigma$ as the ED symbol $\{ x \mid x \in \Sigma\} \cup \{\epsilon\}$ matching any character of $\Sigma$ or nothing.
    We now construct an ED string $\eds{S}$ that has a longest LPF of length $(2f+1)(k+1)+1$ if and only if $\mathcal S$ has a common subsequence of length $k$.
    For each $i \in [1..f]$ we define 
    \(
        \eds{S_i} = \edc{S_i[1]\\\varepsilon}\edc{S_i[2]\\\varepsilon}\cdots\edc{S_i[|S_i|]\\\varepsilon}.
    \)
    By construction, $\Language(\eds{S_i})$ is the set of subsequences of $S_i$.
    Let $\$$ be a symbol not in $\Sigma$.
    Let $\ell\geq 3$ be minimal such that $k\cdot(\ell-2) > \sum_i|S_i| = ||\mathcal S||$, where $||\mathcal S||$ denotes the accumulated lengths of all strings in $\mathcal S$.
    Define 
    \[
        \eds{S} = \left(\edc\$\Sigma^k\right)^\ell\edc{\$}\eds{S_1}\edc{\$}\cdots\edc\$\eds{S_f}\edc{\$}\Sigma^k\edc\$.
    \]
    By the minimality of the chosen value of $\ell$, $||\eds{S}|| \in \mathcal O(||\mathcal S||)$.
    Now consider the LPF of the position of the second occurrence of $\edc{\$}$ in $\eds{S}$, i.e., symbol $\eds{S}[k+2]$.

    \begin{itemize}
        \item The only non-empty previous factor $F$ of $\eds{S}[k+2]$ starts at the very beginning of $\eds{S}$ because it must start with '$\$$'. Assume that $F$ is the longest possible such factor.
        \item On the one hand, $F$ has length at least $(\ell-1)(k+1)+1$ due to the prefix $\left(\edc\$\Sigma^k\right)^\ell$ of $\eds{S}$ with $\ell\geq 3$ matching the first $(\ell-1)(k+1)+1$ symbols starting at $\eds{S}[k+2]$.
        \item On the other hand, $F$ is of the form $(\$W)^z\$(W[1..p])$ for some $z\geq0$, $p \in [0..k-1]$, $W\in\Sigma^k$ because the occurrences of $F$ starting at~$\eds{S}[1]$ and at the second $\edc{\$}$ overlap.
    \end{itemize}
        If $|F| = (k+1)(f+\ell)+1$, $W$ is a common subsequence of all strings in $\mathcal S$. 
            That is because $F$ is a prefix of a string $X \in \Language(\eds{S})$ by the definition of the strong variant of \PbLongestPrevious{},
            and therefore its substrings $Y_i$ and $Y_{i+1}$ matching with $\{\$\}\eds{S_i}\{\$\}$ and $\{\$\}\eds{S_{i+1}}\{\$\}$, respectively, must match, i.e., $Y_i = Y_{i+1}$ for every $i \in [1..f-1]$.
            Thus, $W = \$ Y_i \$$ is determined by $\eds{S_i}$, for all $i$.
        Finally, $F$ cannot be longer than $(k+1)(f+\ell)+1$ 
            due to the number of $\$$'s and since every $(k+1)$-st character of $F$ is $\$$.
    It is left to show that all other text-positions have a shorter LPF.
            \begin{itemize}
                \item An occurrence of a string starting after the prefix $\left(\edc\$\Sigma^k\right)^{\ell - 1}$ of \eds{S} can have length at most 
                    \begin{align*}
                    3 + 2k + \sum_{i=1}^f {\left(1 + |S_i|\right)} 
                        =\ & 3 + 2k + f + \sum_{i=1}^f {|S_i|} \\
                    <\ & 3 + 2k + f + k(\ell-2) 
                    =\  3 + f + k\ell \\
                    <\ & kf + f + k\ell+\ell+1 
                    =\  (k+1)(f+\ell) + 1
                \end{align*}
                \item An occurrence of a string starting at a text-position $p$ before $\left(\edc\$\Sigma^k\right)^{\ell-1}$ with length of at least $2k+2$ has a substring matching $\edc{\$}\Sigma^k\edc{\$}$.
                A sufficiently long LPF for text-position $p$ must overlap by construction, and thus again every $(k+1)$-st character must be $\$$.
                When $p$ is after the second $\$$, the LPF of $P$ is shorter than $(k+1)(f+\ell)+1$.
            \end{itemize}
\end{proof}

\begin{example}
    As an example for the proof of \cref{thm_lpf_np_hard} consider $\mathcal S = \{S_1, S_2\}$ with $f = |\mathcal S| = 2$, $S_1 = \texttt{abb}$ and $S_2 = \texttt{bab}$.
    We want to test whether $\mathcal S$ has a common subsequence of length $k=2$. 
    Therefore, we need to select the smallest $\ell > (|S_1| + |S_2|)/k + 2 =  5$, which is $\ell = 6$.
    Then 
$\eds{S_1} = \edc{\texttt{a} \\ \epsilon} \edc{\texttt{b} \\ \epsilon} \edc{\texttt{b} \\ \epsilon}$,
$\eds{S_2} = \edc{\texttt{b} \\ \epsilon} \edc{\texttt{a} \\ \epsilon} \edc{\texttt{b} \\ \epsilon}$, and
\begin{align*}
    \eds{S} &= \left( \$ \edc{\texttt{a} \\ \texttt{b} \\ \epsilon}^2 \right)^6 \$ \eds{S_1} \$ \eds{S_2} \$ \edc{\texttt{a} \\ \texttt{b} \\ \epsilon}^2 \$  \\
         \\ &= \$ \edc{\texttt{a} \\ \texttt{b} \\ \epsilon}^2  \$ \left( \edc{\texttt{a} \\ \texttt{b} \\ \epsilon}^2 \$ \right)^5 \edc{\texttt{a} \\ \epsilon} \edc{\texttt{b} \\ \epsilon} \edc{\texttt{b} \\ \epsilon} \$ \edc{\texttt{a} \\ \epsilon} \edc{\texttt{b} \\ \epsilon} \edc{\texttt{b} \\ \epsilon} \$ \edc{\texttt{a} \\ \texttt{b} \\ \epsilon}^2 \$.
\end{align*}
The second occurrence of \$, i.e., $\eds{S}[4] = \$$ has the LPF $(\texttt{\$ab})^8$ starting at $S[1]$. This length is $(k+1)(f+\ell)+1$, and thus $\texttt{ab}$ must be a common subsequence of $S_1$ and $S_2$ of length $k = 2$.
\end{example}

\section{Hardness of Inequality?}\label{secInequality}

Given two ED strings $\eds{S}$ and $\eds{T}$, we want to address the problem to decide whether $\Language(\eds{S}) \neq \Language(\eds{T})$.
If both strings are GD strings, then each symbol stores strings of equal length.
It is therefore possible to construct a finite deterministic automaton for each symbol with one starting and accepting state, and connect starting and ending states of subsequent symbols to form
an automaton for a GD string~$\eds{S}$ that accepts $\Language(\eds{S})$, cf.~\cite[Sect.~3]{alzamel18degenerate}.
For instance in \cref{figEDAutomaton}, the state $q_4$ is the accepting state of the first symbol and the starting state of the second symbol.
Finally, equivalence of two finite deterministic automata can be checked in time linear in their sizes~\cite{hopcroft71linear}.
The same construction for a (regular) ED string\footnote{In this section, we here use sometimes the adjective \emph{regular} to emphasize on the fact that we consider an ED-string, not a specific subclass like an indeterminate string nor a superclass.} may create a finite nondeterministic automaton because the length difference in an ED symbol can cause $\epsilon$-transitions,
cf.~\cref{figEDAutomaton}.
Testing inequivalence of two nondeterministic finite automata is PSPACE-complete~\cite[AL1]{garey79computers}
while testing inequivalence of regular expressions without Kleene star is NP-complete~\cite[Problem 3.3.10]{garey79computers}\cite[Thm.~2.7]{hunt76equivalence}. 
Since the latter type of expressions is a superclass of ED strings, we know that inequality testing for ED strings is in NP.

\begin{figure}[h]
    \begin{minipage}{0.6\linewidth}
    \adjustbox{valign=c}{
        \(
            \eds{S_1} = \edc{\texttt{aba} \\ \texttt{abb} \\ \texttt{aaa} } \edc{\texttt{a} \\ \texttt{b} }
        \)
    }
    \hspace{1em}
    \includegraphics[valign=c,width=0.5\textwidth,page=1]{./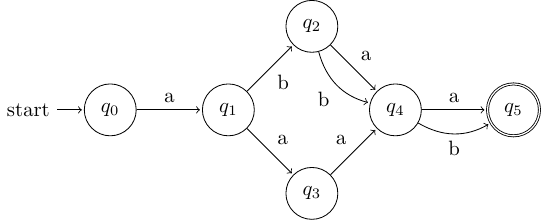}

    \adjustbox{valign=c}{
        \(
            \eds{S_2} = \edc{\texttt{ab} \\ \texttt{a} } \edc{\texttt{a} \\ \epsilon }
        \)
    }
    \hspace{1em}
    \includegraphics[width=0.5\textwidth,page=2]{./img/automata.pdf}
    \end{minipage}
    \begin{minipage}{0.38\linewidth}
    \caption{Automata representing the GD string $\eds{S_1}$ (top) and the ED string $\eds{S_2}$ (bottom). 
    While it is possible to construct an automaton in the size of an ED string, the automaton for ED strings has $\epsilon$-transitions which may blow up the size exponentially when transforming the nondeterministic automaton to a deterministic one.}
    \label{figEDAutomaton}
    \end{minipage}
\end{figure}

Based on the proof of \cite[Thm.~2.7]{hunt76equivalence}, 
we further can show that inequivalence testing is NP-complete for a slight extension of ED strings allowing one level of nesting, which we call 2ED strings.
A \emph{2ED string} is a string $\eds{T}$ such that each symbol $\eds{T}[i]$ is a set of ED strings.
To this end, we reuse the reduction from 3-SAT described \cref{secMUSMAW},
in particular the family of indeterminate strings $\{\eds{T_i}\}_{i=1}^m$ constructed in the proof of \cref{thm_mus_np_hard}.
We define the two 2ED strings $\eds{S}$ and $\eds{T}$ to compare with as

\begin{minipage}{0.4\linewidth}
\[
    \eds{S} = \edc{\eds{T_1}\\ \vdots\\ \eds{T_m}}
    \text{~and~}
    \eds{T} = \edc{0 \\ 1}^n.
\]
\end{minipage}
\begin{minipage}{0.55\linewidth}
We observe that $\eds{T}$ is a (regular) ED string with $\Language(\eds{T}) = \{0,1\}^n$, i.e., the language of $\eds{T}$ generates all binary strings of length $n$.
Finally, $\Language(\eds{S}) \neq \Language(\eds{T})$ if and only if there is a satisfying assignment.
\end{minipage}
However, the complexity of deciding whether $\Language(\eds{S}) \neq \Language(\eds{T})$ for two (regular) ED strings is left open for future work.

\section{Conclusion}

We have studied various problems concerning the characterization of ED strings.
While notably longest common extension queries and the computation of the longest repeating factor can be done in quadratic time,
other problems are NP-hard to compute for specific generalizations of strings such as indeterminate strings.
Since every indeterminate string is a GD string, and every GD string is also an ED string, all hardness results are also true for taking ED strings as an input.

Finally, the studied NP-hard problems are not only NP-hard, but also NP-complete.
To see that, we show that we can verify a certificate in polynomial time.
In all problems, a certificate is a classic string $X$ that solves the addressed problem.
To verify that $X$ is a solution, we apply a pattern matching algorithm on the input ED string~\eds{S}, which runs in polynomial time.
For instance, a certificate for a unique substring or an absent word is a string $X$, for which we check that it occurs exactly once or zero times, respectively.

For future work, additionally to the inequality problem of \cref{secInequality}, we would like to study whether some problems can further parameterized to obtain FPT algorithms, which hopefully work well in practice.

\bibliographystyle{plain}

\end{document}